\newtheorem{theorem}{Theorem}[section]
\newtheorem{corollary}{Corollary}[section]
\newtheorem{proposition}{Proposition}[section]
\newtheorem{definition}{Definition}[section]
\newtheorem{lemma}{Lemma}[section]
\newtheorem{example}{Example}[section]
\providecommand{\keywords}[1]{{\tiny\bf Keywords: }{\tiny #1}}
\newcommand{\CC}{\mathbb{C}}
\newcommand{\IPL}{\mathrm{IPL}}
\newcommand{\RR}{\mathbb{R}}
\newcommand{\indicator}[1]{\left\llbracket #1 \right\rrbracket}
\newcommand*\pFqskip{8mu}
\newcommand*\pFq{\begingroup
        \catcode`\,\active
        \def ,{\mskip\pFqskip\relax}
        \dopFq
}
\def\dopFq#1#2#3#4#5{
        {}_{#1}F_{#2}\biggl[\genfrac..{0pt}{}{#3}{#4};#5\biggr]%
        \endgroup
}
\newcommand{\Comments}{1}
\newcommand{\mynote}[2]{\ifnum\Comments=1\textcolor{#1}{#2}\fi}
\newcommand{\mytodo}[2]{\ifnum\Comments=1%
  \todo[linecolor=#1!80!black,backgroundcolor=#1,bordercolor=#1!80!black]{#2}\fi}
\title{Sparsification of Phylogenetic Covariance Matrices\\ of $k$-Regular Trees}
\author[1]{Sean Svihla}
\author[1,*]{Manuel E. Lladser}
\affil[1]{Department of Applied Mathematics, University of Colorado, Boulder}
\affil[*]{Corresponding author}
\date{}
\begin{document} 

\maketitle

\begin{abstract}
Consider a tree $T=(V,E)$ with root $\circ$ and edge length function $\ell:E\to\mathbb{R}_+$. The phylogenetic covariance matrix of $T$ is the matrix $C$ with rows and columns indexed by $L$, the leaf set of $T$, with entries $C(i,j):=\sum_{e\in[i\wedge j,o]}\ell(e)$, for each $i,j\in L$. Recent work~\cite{GorLla23} has shown that the phylogenetic covariance matrix of a large, random binary tree $T$ is significantly sparsified with overwhelmingly high probability under a change-of-basis with respect to the so-called Haar-like wavelets of $T$. This finding notably enables manipulating the spectrum of covariance matrices of large binary trees without the necessity to store them in computer memory but instead performing two post-order traversals of the tree. Building on the methods of~\cite{GorLla23}, this manuscript further advances their sparsification result to encompass the broader class of $k$-regular trees, for any given $k\ge2$. This extension is achieved by refining existing asymptotic formulas for the mean and variance of the internal path length of random $k$-regular trees, utilizing hypergeometric function properties and identities.
\end{abstract}

\keywords{cophenetic matrix, Haar-like wavelets, hierarchical data, hypergeometric functions, Lambert W function, metagenomics, phylogenetic covariance matrix, sparsification, ultrametric matrix} 

\section{Introduction}
\label{sec:introduction}

    Hierarchical datasets are described, or presumed to be, by a rooted tree that recursively organizes data into clusters so that its leaves are in a one-to-one correspondence with the data points. Such datasets are common in various fields such as microbial ecology~\cite{cavsfo67}, where they arise as models of inter-species covariance~\cite{pav04,FukMcMDetEtAl12}. In this context, the associated covariance matrices are often large and dense, making their manipulation computationally challenging. Nevertheless, these matrices bear redundancies induced by their hierarchical structure, which may be exploited to sparsify them and make such manipulations tenable, if not trivial.

    Ultrametric matrices, which arise in probability theory and statistical physics among other fields, are often also dense. A symmetric matrix $C\in\mathbb{R}_+^{n\times n}$ is called \textit{ultrametric} when $C(i,j)\ge\min\{C(i,k),C(k,j)\}$, for all $i,j,k\in\{1,\ldots,n\}$. If, in addition, $C(i,i)>C(i,j)$ for all $j\ne i$ when $n>1$, or $C(1,1)>0$ when $n=1$, $C$ is called \textit{strictly ultrametric}. These matrices are fully dense, that is, all their entries are nonzero, but have a myriad of mathematical properties~\cite{MarMicSan94,DelMarSan14}. 

    A matrix $C$ is ultrametric if and only if there is a rooted binary tree $T=(V,E)$ and edge length function $\ell:E\to\RR_+$ such that~\cite{NabVar94,GorLla23}:
    \begin{equation}\label{eqn:cov-mat}
        C=\sum_{e\in E}\ell(e)\,\delta_e\,\delta_e',
    \end{equation}
    where $\delta_e$ is the binary column vector, with entries indexed by the leaves of $T$, indicating the leaves that descend from $e$, and $\delta_e'$ is the transpose of $\delta_e$. If $L$ denotes the leaf set of $T$, the above identity is equivalent to having
    \begin{equation}\label{eqn:cov-mat-ij}
        C(i,j)=\sum_{e\in[i\wedge j,\circ]}\ell(e),\text{ for all }i,j\in L,
    \end{equation}
    where $[i\wedge j,\circ]$ denotes the set of edges that connect $(i\wedge j)$, the least common ancestor of $i$ and $j$, with the root of the tree, denoted from now on as $\circ$. Ultrametric matrices have, therefore, a recursive structure, and their entries are redundant, suggesting that they may be amenable to some form of compression~\cite{GorLla23}. 
    
    The formulation in (\ref{eqn:cov-mat-ij}) arises naturally as a model of phylogenetic covariance wherein the genetic drift of a particular trait follows a Brownian motion~\cite{Har19}. 
    Under this model, each leaf represents a microbial species (or some notion thereof), and the trait variation among different species is a function of time since they diverged evolutionarily. Intuitively, since species sharing more of their evolutionary history should thrive or struggle in similar environments accordingly, a natural measure of trait covariance between two different species $i$ and $j$ is the length of their shared evolutionary history, namely the quantity in (\ref{eqn:cov-mat-ij}).

    In the general setting of rooted trees---not necessarily binary---a matrix with entries such as in (\ref{eqn:cov-mat-ij}) is called the \textit{phylogenetic covariance matrix} (or cophenetic matrix) of a tree. (The term ``phylogenetic'' is usually omitted from now on.) Tree covariance matrices arise naturally in the context of hierarchical datasets; in particular, the class of covariance matrices associated with datasets having a binary hierarchy is precisely the class of ultrametric matrices.

    Recent work~\cite{GorLla23} has demonstrated that, in the case of large datasets with a binary hierarchy, or equivalently, weighted and rooted binary trees, the associated covariance matrices become asymptotically diagonal with overwhelmingly high-probability after changing basis to the so-called Haar-like wavelets~\cite{GavNad10} of the tree. (By ``asymptotically diagonal'', we mean that the fraction of non-zero off-diagonal entries of the covariance matrix, with respect to the wavelet basis, is asymptotically negligible, as the number of tree leaves tends to infinity.) The sparsification of such covariance matrices facilitates manipulations that may be infeasible otherwise. For instance, the spectrum of ultrametric matrices can be derived, whether exactly or approximately depending on the matrix size, from just two post-order traversals of the associated tree without having to store the actual matrix in computer memory~\cite{GorLla23}. In addition, the subclass of ultrametric matrices diagonalized by Haar-like wavelets has been characterized, and their spectra shown to be in bijection with non-negative decreasing functions on the interior nodes of a binary tree~\cite{GorLla23}.

    Nevertheless, many hierarchical datasets, for example, in the context of phylogenomic studies~\cite{ZhuMaiPfeETAL19}, are non-binary and suffer the same downfall of having unmanageably large and dense covariance matrices. It is a natural question, then, whether the same technique may be used to sparsify covariance matrices belonging to a broader class of hierarchy. This extended abstract extends some of the ideas in~\cite{GorLla23} to the broader context of $k$-regular trees, weighted, i.e., rooted trees for which each interior node has exactly $k$ children. Specifically, Theorem~\ref{thm:cov-diag}, Theorem~\ref{thm:kreg-zeta}, Corollary~\ref{cor:zeta->1}, and Corollary~\ref{cor:kreg-sparse} in this manuscript are generalizations of~\cite[Theorem 2.3, Theorem 3.4, Corollary 3.5, and Corollary 3.8]{GorLla23}, respectively, such that they are applicable to any $k\ge2$. In addition, Theorem~\ref{thm:kreg-asym} supplies more precise asymptotic formulas for the mean and variance of the internal path length of random $k$-regular trees---beyond what is currently available in the literature.

\subsection{Notation and Terminology}

    Depending on the context, we regard functions with finite domains as finite-dimensional column vectors, and vice versa. Throughout, $\indicator{\cdot}$ is used to denote indicator functions.
    
    We use standard terminology for trees unless otherwise stated. In particular, $T=(V,E)$ represents a tree rooted at a vertex $\circ\in V$ of degree one and endowed with an edge length function $\ell:E\to\RR_+$. The size of $T$ is the quantity $|T|:=|V|$. The sets of leaves and interior nodes of $T$ are denoted as $L$ and $I$, respectively, and the internal path length of $T$ is the quantity
    \[
        \IPL(T) = \sum_{v\in I}\text{depth}(v).
    \]

    For $u,v\in V$, we denote by $[u,v]$ the set of edges in the shortest path from $u$ to $v$ and by $(u\wedge v)$ the least common ancestor of $u$ and $v$.

    $\mathring{T}$ denotes the \textit{interior} of $T$, obtained by trimming the leaves of $T$.
    
    In our setting, $\circ\notin L$ and $\circ\in I$. (To avoid confusion, since $\circ$ has degree one, it is useful to think of it as an artificial root appended to the root of a tree.) If $v\in V$ has $k$ children, we denote them as $v_1,\ldots,v_k$, and we call a tree \textit{$k$-regular}, $k\ge2$, when its root $\circ$ has degree one and, for all $v\in I\setminus\{\circ\}$, $v$ has $k$ children. 
    
    For a given $v\in V$, $ T(v)$ denotes the sub-tree of $T$ rooted at $v$ and containing all of its descendants. We let $L(v)$, $ I(v)$, and $ E(v)$ denote the sets of leaves, internal nodes, and edges of $ T(v)$, respectively. In addition, we orient edges away from the root, that is, if $e=(u,v)\in E$, then $u$ is understood to be the parent of $v$ (and $v$ is understood to be a child of $u$). We define $T(e):=T(v)$ and $L(e):=L(v)$. 

    The \textit{trace length} of $T$ is the function $\ell^*: E\rightarrow\mathbb{R}_+$ defined as \cite{GorLla23}:
    \[
        \ell^*(e):=|L(e)|\,\ell(e).
    \]
    We define for $u, v\in V$:
    \[
        \ell(u,v) := \sum_{e\in[u,v]}\ell(e),\text{ and }\ell^*(u,v) := \sum_{e\in[u,v]}\ell^*(e),
    \] 
    and, given non-empty $J\subset V$, we denote by $\ell(J, v)$ the column vector with entries $\ell(j,v)$, for all $j\in J$, having dimension $|J|$. We give an analogous definition to $\ell^*(J,v)$.
    
    \textbf{In what follows, unless otherwise stated, $T$ denotes a $k$-regular tree.}

\section{Haar-like Wavelets on $k$-Regular Trees}
\label{sec:haar-like-wavelet-basis}

    In this section, we specialize the Haar-like wavelet basis given in~\cite{GavNad10} to our setting of $k$-regular trees and present a useful interaction between the basis and the phylogenetic covariance matrix associated with any such tree (Theorem~\ref{thm:cov-diag}).

    Wavelets are usually functions defined on a Euclidean space and derive their name from their commonly wave-like shape. They are prevalent in time series and image analysis to localize information across various scales. In our context, the Haar-like wavelets associated with a tree $T$ are functions from $L$ to $\RR$ that are in a one-to-one correspondence with the elements in the set $\{\circ\}\cup\big(I\setminus\{\circ\}\times\{1,\ldots,k-1\}\big)$; in particular, there are $1+(k-1)(|I|-1)$ wavelets associated with a $k$-regular tree. The precise definition follows.

    Given a $v\in I$ and integer $1\le n < k$, define
    \[
        L_{v,n}:=\bigcup_{j=1}^{n}L(v_j),\text{ and }L_{v,n}^+:=L_{v,n}\cup L(v_{n+1}).
    \]

    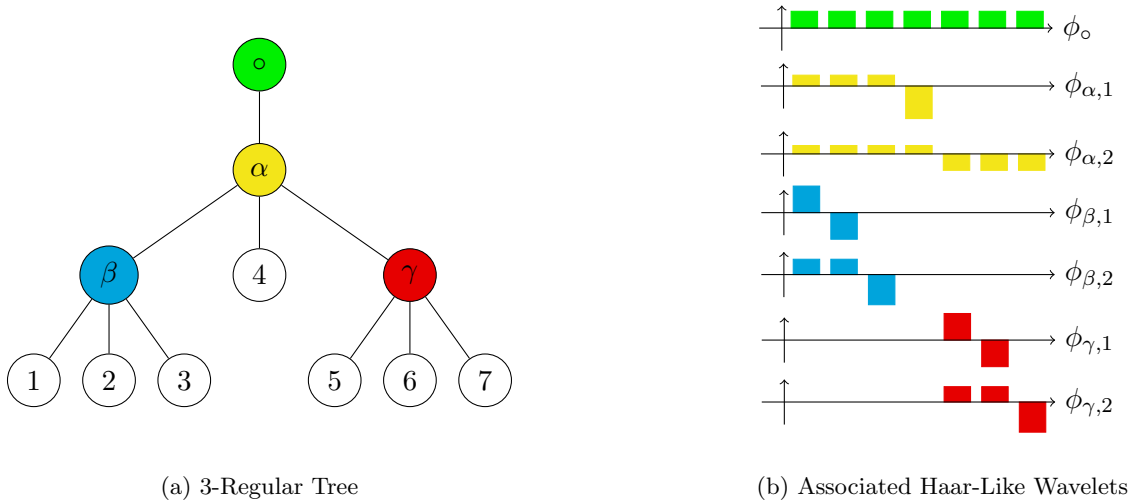
\begin{figure}[t!]
        \captionsetup[subfigure]{font=footnotesize}
        \centering
        \subcaptionbox{3-Regular Tree}[.45\textwidth]{%
            \begin{tikzpicture}[every node/.style={black, thin, circle}, level distance=1.4cm, level/.style={sibling distance=10mm}, level 2/.style={sibling distance=20mm}]
            \node[draw,fill=green!94!black,minimum size=.7cm] {$\circ$}
            child {node[draw,fill=yellow!94!black,minimum size=.7cm] {$\alpha$}
                child {node[draw,fill=cyan!90!black,minimum size=.7cm] {$\beta$}
                    child {node[draw,minimum size=.7cm] {1}}
                    child {node[draw,minimum size=.7cm] {2}}
                    child {node[draw,minimum size=.7cm] {3}}
                }
                child {node[draw,minimum size=.7cm] {4}}
                child {node[draw,fill=red!90!black,minimum size=.7cm] {$\gamma$}
                    child {node[draw,minimum size=.7cm] {5}}
                    child {node[draw,minimum size=.7cm] {6}}
                    child {node[draw,minimum size=.7cm] {7}} 
                }
            };
            \end{tikzpicture}\vspace{0.67cm}}\hfill%
        \subcaptionbox{Associated Haar-Like Wavelets}[.45\textwidth]{  
            \begin{tikzpicture}
                \draw[->] (-0.3,-0.3) -- (-0.3,0.3) node[left] {};
                \draw[draw, fill=black,color=green!94!black] plot[ybar] coordinates{(0,0.2235) (.5,0.2235) (1,0.2235) (1.5,0.2235) (2,0.2235) (2.5,0.2235) (3,0.2235)};
                \draw[->] (-0.6,0) -- (3.3,0) node[right] {$\phi_\circ\,\,\,\,\,$};
            \end{tikzpicture}
            \begin{tikzpicture}
                \draw[->] (-0.3,-0.3) -- (-0.3,0.3) node[left] {};
                \draw[draw, fill=black,color=yellow!94!black] plot[ybar] coordinates{(0,0.144) (.5,0.144) (1,0.144) (1.5,-0.433)};
                \draw[->] (-0.6,0) -- (3.3,0) node[right] {$\phi_{\alpha,1}$};
            \end{tikzpicture}
            \begin{tikzpicture}
                \draw[->] (-0.3,-0.3) -- (-0.3,0.3) node[left] {};
                \draw[draw, fill=black,color=yellow!94!black] plot[ybar] coordinates{(0,0.1115) (.5,0.1115) (1,0.1115) (1.5,0.1115) (2,-0.2182) (2.5,-0.2182) (3,-0.2182)};
                \draw[->] (-0.6,0) -- (3.3,0) node[right] {$\phi_{\alpha,2}$};
            \end{tikzpicture}
            \begin{tikzpicture}
                \draw[->] (-0.3,-0.3) -- (-0.3,0.3) node[left] {};
                \draw[draw, fill=black,color=cyan!90!black] plot[ybar] coordinates{(0,0.3535) (.5,-0.3535)};
                \draw[->] (-0.6,0) -- (3.3,0) node[right] {$\phi_{\beta,1}$};
            \end{tikzpicture}
            \begin{tikzpicture}
                \draw[->] (-0.3,-0.3) -- (-0.3,0.3) node[left] {};
                \draw[draw, fill=black,color=cyan!90!black] plot[ybar] coordinates{(0,0.204) (.5,0.204) (1,-0.4011)};
                \draw[->] (-0.6,0) -- (3.3,0) node[right] {$\phi_{\beta,2}$};
            \end{tikzpicture}
            \begin{tikzpicture}
                \draw[->] (-0.3,-0.3) -- (-0.3,0.3) node[left] {};
                \draw[draw, fill=black,color=red!90!black] plot[ybar] coordinates{(2,0.3535) (2.5,-0.3535)};
                \draw[->] (-0.6,0) -- (3.3,0) node[right] {$\phi_{\gamma,1}$};
            \end{tikzpicture}
            \begin{tikzpicture}
                \draw[->] (-0.3,-0.3) -- (-0.3,0.3) node[left] {};
                \draw[draw, fill=black,color=red!90!black] plot[ybar] coordinates{(2,0.204) (2.5,0.204) (3,-0.4011)};
                \draw[->] (-0.6,0) -- (3.3,0) node[right] {$\phi_{\gamma,2}$};
            \end{tikzpicture}\vspace{0.33cm}}\\
        \caption{The Haar-like wavelets associated with a 3-regular tree. The wavelets are orthonormal, piece-wise constant functions, with domain $1 \leq n \leq 7$ and, except for $\phi_0$, they are each mean zero.}
        \label{fig:haar-wavelets}
    \end{figure}

    \begin{definition}[Haar-like wavelets of a $k$-regular tree]
    \label{dfn:haar-basis}
        The (mother) wavelet associated with the root of $T$ is the function $\phi_\circ:L\to\RR$ defined
        \[
            \phi_\circ(i) := \frac{1}{\sqrt{|L|}},\text{ for }i\in L.
        \]
        For each $v\in I\setminus\{\circ\}$ and integer $1\le n<k$, the wavelet associated with $(v,k)$ is the function $\phi_{v,n}:L\to\RR$ defined as
        \[
            \phi_{v,n}(i):= \sqrt{\frac{|L(v_{n+1})|}{|L_{v,n}|\cdot|L_{v,n}^+|}}\indicator{i\in L_{v,n}}-\sqrt{\frac{|L_{v,n}|}{|L(v_{n+1})|\cdot|L_{v,n}^+|}}\indicator{i\in L(v_{n+1})}
        \]
        for $i\in L$.
    \end{definition}

    The Haar-like wavelets are mutually orthogonal~\cite{GavNad10}. Moreover, from well-known facts about graphs and trees,
    \[
        |I|+|L|-1=\frac{1}{2}\sum_{v\in V}\text{deg}(v)=\frac{1}{2}\Big[|I|+|L|+k\big(|I|-1\big)\Big],
    \]
    hence $|L|=1+(k-1)(|I|-1)$. That is, there are as many Haar-like wavelets as leaves on the tree, and the wavelets form therefore an orthonormal basis of the linear space of functions from $L$ to $\RR$. Further, $\phi_{v,n}$ assumes one of two distinct values within its support, which are chosen so that $\phi_{v,n}$ has unit $\ell^2$-norm and mean zero:
    \begin{equation*}\label{eqn:wavelet-sum-zero}
        \sum_{i\in L}\phi^2_{v,n}(i) = 1\text{ and }\sum_{i\in L}\phi_{v,n}(i) = 0.
    \end{equation*}
    To help fix ideas, Figure \ref{fig:haar-wavelets} provides an illustration of the Haar-like wavelets.

    Notice that $\phi_{v,n}$ is supported on $L_{v,n}^+$. As such, wavelets associated with nodes nearer the root capture coarser information about the leaves of the tree. As noted in \cite{GorLla24} for $k=2$, given a function $\varphi:L\to\RR$ and wavelet $\phi_{v,n}$, we have
    \begin{equation}\label{ide:innerprodA}
        \langle\varphi,\phi_{v,n}\rangle=c_{v,n}\cdot\left\{\frac{1}{|L_{v,n}|}\sum_{i\in L_{v,n}}\varphi(i)-\frac{1}{|L(v_{n+1})|}\sum_{i\in L(v_{n+1})}\varphi(i)\right\},
    \end{equation}
    where 
    \begin{equation}\label{ide:innerprodB}
        c_{v,n}:=\sqrt{\frac{|L_{v,n}|\,|L(v_{n+1})|}{|L_{v,n}^+|}}.
    \end{equation}
    Consequently, projecting a function onto $\phi_{v,n}$ is the same, up to a constant factor, as computing the difference between the average values of $\varphi$ over $L_{v,n}$ and $L(v_{n+1})$. Hence the coordinates of a real-valued function defined over $L$ with respect to the Haar-like wavelets can be computed efficiently, which is relevant for applications involving large trees.

    The following result highlights a remarkably simple action of the covariance matrix of a $k$-regular tree over its Haar-like wavelets. This property was first noticed in~\cite{GorLla23} for the case of 2-regular trees.
    
    \begin{theorem}
    \label{thm:cov-diag}
        If $\psi$ is a Haar-like wavelet associated with $v\in I$, then $C\,\psi = \mathrm{diag}(\ell^*(L, v))\,\psi$. 
    \end{theorem}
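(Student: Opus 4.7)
The plan is to expand $C\psi$ using the rank-one decomposition in~(\ref{eqn:cov-mat}), so that
\[
C\psi \;=\; \sum_{e\in E}\ell(e)\,\langle\delta_e,\psi\rangle\,\delta_e,
\]
and then evaluate this coordinate-wise at each $i\in L$. The key observation I will use throughout is structural: for any two edges $e,f\in E$, the leaf sets $L(e)$ and $L(f)$ are either nested or disjoint, and similarly $L(e)$ and the support $L_{v,n}^+$ of a wavelet attached to $v$ are nested or disjoint. Combined with the orthogonality of $\psi$ to constants (when $v\ne\circ$), this lets me decide edge-by-edge whether $\langle\delta_e,\psi\rangle$ vanishes.

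First I would dispose of the mother wavelet $\psi=\phi_\circ$: since $\phi_\circ\equiv|L|^{-1/2}$ is constant, $\langle\delta_e,\phi_\circ\rangle = |L(e)|/\sqrt{|L|}$, and plugging into the expansion gives, for any $i\in L$,
\[
(C\phi_\circ)(i) \;=\; \frac{1}{\sqrt{|L|}}\sum_{e\in[i,\circ]}|L(e)|\,\ell(e) \;=\; \frac{\ell^*(i,\circ)}{\sqrt{|L|}} \;=\; \ell^*(i,\circ)\,\phi_\circ(i),
\]
which is exactly the claim at $v=\circ$.

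Next I would treat $\psi=\phi_{v,n}$ for an interior $v\ne\circ$. Because $\phi_{v,n}$ is supported on $L_{v,n}^+\subseteq L(v)$ and sums to zero on $L$, I would classify each edge $e\in E$ according to how $L(e)$ sits relative to $L(v)$: if $L(e)\supseteq L(v)$ then $\langle\delta_e,\phi_{v,n}\rangle=0$ by the mean-zero property; if $L(e)\cap L(v)=\emptyset$ then it vanishes by support; so only edges with $L(e)\subsetneq L(v)$ can contribute, and such an $e$ satisfies $L(e)\subseteq L(v_j)$ for a unique child index $j$. On $L(v_j)$ the wavelet is constant (equal to the positive value if $j\le n$, the negative value if $j=n+1$, and $0$ if $j\ge n+2$), so $\langle\delta_e,\phi_{v,n}\rangle=\phi_{v,n}(i)\cdot|L(e)|$ for any $i\in L(e)$ inside the support.

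Finally, evaluating at a fixed $i\in L$ and summing only over $e\in[i,\circ]$, the above classification makes the path $[i,\circ]$ split into the edges strictly below $v$ (which all lie in the same child-subtree as $i$ and all contribute with the same constant value $\phi_{v,n}(i)$) and the edges at or above $v$ (which contribute zero). Therefore
\[
(C\phi_{v,n})(i) \;=\; \phi_{v,n}(i)\sum_{e\in[i,v]}|L(e)|\,\ell(e) \;=\; \phi_{v,n}(i)\,\ell^*(i,v)
\]
whenever $i$ lies in the support, and both sides are $0$ otherwise; in every case $(C\phi_{v,n})(i)=\ell^*(i,v)\,\phi_{v,n}(i)$, which is precisely $C\phi_{v,n}=\mathrm{diag}(\ell^*(L,v))\,\phi_{v,n}$. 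The main obstacle is purely bookkeeping: keeping the three-way case split over the child index $j\in\{1,\dots,k\}$ of $v$ coherent and verifying that the positive lobe, the negative lobe, and the zero-region all produce the same scalar factor $\ell^*(i,v)$; no new identity beyond the $k=2$ argument of~\cite{GorLla23} is needed once the support/mean-zero dichotomy is set up.
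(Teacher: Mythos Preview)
Your argument is correct, and it reaches the same conclusion by a genuinely different organization than the paper's proof. The paper works from the entrywise formula~(\ref{eqn:cov-mat-ij}): it fixes a leaf $j$, writes $(C\psi)(j)=\sum_i \ell(i\wedge j,\circ)\,\psi(i)$, splits into the cases $j\notin L_{v,n}^+$ and $j\in L_{v,n}^+$, and then invokes the combinatorial equivalence $e\in[i\wedge j,v]\Leftrightarrow i\in L(e)\text{ and }e\in[j,v]$ to turn the double sum over leaves into a single sum of $|L(e)|\,\ell(e)$ along $[j,v]$. You instead start from the rank-one decomposition~(\ref{eqn:cov-mat}) and analyze $\langle\delta_e,\psi\rangle$ edge by edge, using the nested/disjoint structure of $L(e)$ versus $L(v)$ together with the mean-zero property to kill every edge except those strictly below $v$; the trace length $\ell^*$ then drops out immediately because $\langle\delta_e,\psi\rangle=|L(e)|\,\phi_{v,n}(i)$ on the surviving edges. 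Your edge-centric route is slightly more structural and avoids the explicit case split on $(i\wedge j)$, while the paper's leaf-centric argument stays closer to the definition of $C$ and makes the counting interpretation of $\ell^*$ more visible; neither requires anything beyond what the other uses.
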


    \begin{proof}
        We first show the result for $\psi=\phi_\circ$. For this, note that for all $j\in L$
        \[
            (C\psi)(j)=\frac{1}{\sqrt{L}}\sum_{i\in L}\ell(i\wedge j,\circ)=\psi(j)\sum_{i\in L}\sum_{e\in[i\wedge j,\circ]}\ell(e),
        \]
        yet we have the logical equivalence:
        \begin{equation}\label{logeq:key}
            \forall i,j\in L\:\forall v\in I, e\in[i\wedge j,v]\Longleftrightarrow i\in L(e)\text{ and }e\in[j,v].
        \end{equation}
        Hence, $\ell(e)$ occurs $|L(e)|$ times in the previous double-sum, and
        \[
            (C\psi)(j)=\psi(j)\sum_{e\in[j,v]}|L(e)|\,\ell(e)=\psi(j)\,\ell^*(j,v),
        \]
        which is precisely the $j$-th entry of the vector $\mathrm{diag}(\ell^*(L, v))\,\psi$.

        Next, we consider $\psi=\phi_{v,n}$, for a $v\in I\setminus\{\circ\}$ and $1\le n<k$. Then, for each $j\in L$, we have the following cases:
        \begin{enumerate}
            \item[(a)] Assume $j\not\in L_{v,n}^+$. Then, for all $i\in L_{v,n}^+$, $(i\wedge j) = (v\wedge j)$, so
            \begin{align*}
                (C\psi)(j) &= \sum_{i\in L_{v,n}^+}\ell(i\wedge j, \circ)\,\phi_{v,n}(i) \\
                &= \ell(v\wedge j,\circ)\,\sum_{i\in L_{v,n}^+}\phi_{v,n}(i) =0,
            \end{align*}
            because $\phi_{v,n}$ has mean zero. On the other hand, the entry associated with $j$ in the vector $\mathrm{diag}(\ell^*(L, v))\,\psi$ also vanishes because $\psi$ is supported on $L_{v,n}^+$.
            \item[(b)] Finally, assume $j\in L_{v,n}^+$. Then 
            \begin{align*}
                (C\psi)(j)
                &= \sum_{i\in L_{v,n}^+}\ell(i\wedge j, \circ)\,\phi_{v,n}(i)\\
                &= \sum_{i\in L_{v,n}^+}\ell(i\wedge j, v)\,\phi_{v,n}(i)  + \ell(v, \circ)\sum_{i\in L_{v,n}^+}\phi_{v,n}(i)\\
                &=\sum_{i\in L_{v,n}^+} \phi_{v,n}(i) \sum_{e\in[i\wedge j,v]}\ell(e),
            \end{align*}
            where for the middle identity we have reused that $\phi_{v,n}$ has mean zero. But notice that $(i\wedge j) = v$ when $\ell(i\wedge j, v) = 0$. Otherwise, if $(i\wedge j)\ne v$ then $\phi_{v,n}(i) = \phi_{v,n}(j)$. Hence, reusing the logic equivalence in (\ref{logeq:key}), we find that
            \[
                (C\psi)(j) =\phi_{v,n}(j) \sum_{i\in L_{v,n}^+:(i\wedge j)\neq v}\quad\sum_{e\in[i\wedge j,v]}\ell(e)=\psi(j)\,\ell^*(j,v).
            \]
            which corresponds to the entry associated with $j$ in the vector $\mathrm{diag}(\ell^*(L, v))\,\psi$.
        \end{enumerate}
        The theorem now follows from parts (a)-(b). \qedhere
    \end{proof}
    
\section{Sparsification of k-Regular Covariance Matrices}
\label{sec:sparsification-of-covariance-matrices}

    We now leverage the insights from the previous section to obtain a lower bound on the proportion of entries in the covariance matrix of a $k$-regular tree that vanish when switching to the Haar-type basis. To establish our main result in this section, we require the following definition, which is analogous to the one used in~\cite{GorLla23} for 2-regular trees. 

    \begin{definition}[Haar-like matrix of a $k$-regular tree]
    \label{dfn:haar-matrix}
        The Haar-like matrix of $T$ is the square matrix $\Phi$ whose columns are the wavelets associated with the tree; that is, its rows are indexed by $L$ and its columns by the wavelets.
    \end{definition}

    Let $\psi_u, \psi_v$ be wavelets associated with interior nodes $u, v\in I$, respectively. We have as a direct consequence of Theorem~\ref{thm:cov-diag} that
    \begin{equation}\label{ide:magic}
        \big(\Phi'C\Phi\big)(\psi_u,\psi_v) = \psi_u\,\mathrm{diag}\big(\ell^*(L,v)\big)\,\psi_v.
    \end{equation}
    The support of $\mathrm{diag}\big(\ell^*(L,v)\big)\,\psi_v$ is contained in the support of $\psi_v$; therefore, the entry associated with row $\psi_u$ and column $\psi_v$ of $\Phi'C\Phi$ vanish when $\psi_u$ and $\psi_v$ have disjoint supports. Since $\Phi'\,C\,\Phi$ corresponds to $C$ after changing basis to the Haar-like wavelets, (\ref{ide:magic}) shows that the wavelets may be used to sparsify the covariance matrix of $k$-regular trees. Of course, there is no reason why such interactions should occur often enough to meaningfully sparsify the matrix. As detailed next, however, a minimum level of sparsification is guaranteed by the tree's size and internal path length. 

    We emphasize that the following result is a conservative bound on the number of vanished entries under the Haar-like basis. In practice, it is not uncommon to see better sparsification; however, we have found in all cases that the overwhelming majority of sparsification arises from the interactions considered in the following theorem. We discuss this topic further in Section \ref{sec:discussion}.

    \begin{theorem}\label{thm:kreg-zeta}
        Let $\zeta$ denotes the fraction of vanishing entries of $\Phi'C\Phi$. Then
        \[
            1 - \zeta\leq (k-1)^2\frac{1}{|I|} + 2(k-1)^2\frac{\IPL(T)}{|I|^2}.
        \]
    \end{theorem}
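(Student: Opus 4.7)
The plan is to estimate $1-\zeta$ by counting the ordered pairs of wavelets whose supports overlap, since any entry $(\Phi'C\Phi)(\psi_u,\psi_v)$ with disjoint wavelet supports vanishes by (\ref{ide:magic}). Since $\Phi'C\Phi$ has $|L|^2$ entries, it suffices to produce an upper bound $N$ on the number of ordered pairs $(\psi_u,\psi_v)$ with $\mathrm{supp}(\psi_u)\cap\mathrm{supp}(\psi_v)\ne\emptyset$ and then conclude $1-\zeta\le N/|L|^2$.

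To characterize overlap, I would observe from Definition~\ref{dfn:haar-basis} that $\phi_{v,n}$ is supported in $L_{v,n}^+\subseteq L(v)$, while $\phi_\circ$ has support $L$. Hence two wavelets sitting at nodes $u,v\in I$ have potentially overlapping supports only if $u$ and $v$ are ancestor-comparable (one is an ancestor of the other, or $u=v$); pairs involving $\phi_\circ$ trivially overlap, consistent with $\circ$ being an ancestor of every interior node. I would then count ancestor-comparable ordered pairs in $I\times I$: for each $v\in I$ there are $\mathrm{depth}(v)+1$ ancestors of $v$ in $I$ (inclusive), so $\sum_{v\in I}(\mathrm{depth}(v)+1)=\IPL(T)+|I|$ ordered pairs $(u,v)$ have $u$ an ancestor of $v$. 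By symmetry the same count holds with the roles of $u$ and $v$ swapped, and the two collections intersect exactly on the $|I|$ diagonal pairs, so in total there are $|I|+2\,\IPL(T)$ ancestor-comparable ordered pairs. Since each interior node carries at most $k-1$ wavelets, each such pair contributes at most $(k-1)^2$ wavelet pairs, yielding $N\le (k-1)^2(|I|+2\,\IPL(T))$.

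Finally, I would divide by $|L|^2$ and use the identity $|L|-|I|=(k-2)(|I|-1)\ge 0$ for $k\ge 2$, which gives $|L|\ge |I|$ and hence $|L|^2\ge |I|^2$, to conclude
\[
1-\zeta\le\frac{(k-1)^2(|I|+2\,\IPL(T))}{|L|^2}\le\frac{(k-1)^2(|I|+2\,\IPL(T))}{|I|^2}=\frac{(k-1)^2}{|I|}+\frac{2(k-1)^2\,\IPL(T)}{|I|^2}.
\]
No deep machinery is involved; the main obstacle is bookkeeping---in particular, being careful that $\circ$ hosts only one wavelet (so $(k-1)^2$ is a slightly lossy but uniformly valid per-pair bound), that the ancestor-comparable count correctly includes the diagonal, and that the final normalization step exploits the mild inequality $|L|\ge|I|$ rather than a tighter but less clean comparison.
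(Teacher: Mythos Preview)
Your proposal is correct and follows essentially the same approach as the paper: both bound the nonvanishing entries by $(k-1)^2$ times the number of ancestor-comparable ordered pairs in $I\times I$, show this count equals $|I|+2\,\IPL(T)$, and finish with $|L|\ge|I|$. The only cosmetic difference is that you obtain the count directly by summing $\mathrm{depth}(v)+1$ over $v\in I$, whereas the paper sums $|\mathring{T}(u)|$ over $u\in I$ and then converts via the same double-counting identity.
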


    \begin{proof}
        Note that $C$ has dimensions $|L|\times|L|$ because $T$ has as many Haar-like wavelets as leaves. Observe if $u,v\in I$ and $\psi_u,\psi_v$ are any of their corresponding wavelets, identity (\ref{ide:magic}) implies
        \[
            \big(\Phi'\,C\,\Phi\big)(\psi_u,\psi_v)=\sum_{i\in L(u)\cap L(v)}\psi_u(i)\,\ell^*(i,v)\,\psi_v(i).
        \]
        As $\big(\Phi'\,C\,\Phi\big)(\psi_u,\psi_v)=0$ when $L(u)\cap L(v)=\emptyset$, then $\zeta$ obeys
        \[
            1-\zeta \leq \frac{(k-1)^2}{|L|^2}\Big|\Big\{(u,v) \in  I\times  I\:;\: L(u)\cap L(v) \neq \emptyset\Big\}\Big|.
        \]
        Notice, however, that $L(u)\cap L(v)\ne\emptyset$ only if $u$ descends from $v$ (or vice versa)~\cite{GorLla23}. Hence, accounting for pairs of the form $(u,u)$ and, when $u\ne v$, $(u,v)$ and $(v,u)$, we obtain
        \begin{align*}
            1-\zeta &\leq\frac{(k-1)^2}{|L|^2}\left(|I| + 2\sum_{u\in  I}\big(|\mathring{T}(u)| - 1\big) \right) \\
            &\leq\frac{(k-1)^2}{|L|^2}\left(2\sum_{u\in I}|\mathring{T}(u)| -|I|\right).
        \end{align*}
        Now, notice $\sum_{u\in I}|\mathring{T}(u)| = \sum_{u\in I}\sum_{v\in I}\indicator{v\in T(u)}$; that is, each node is counted once for each of its ancestors, or         
        \begin{align*}
            \sum_{u\in I}|\mathring{T}(u)| &=\sum_{v\in I}\big(1+\text{depth}(v)\big) \\
            &= |I|+\IPL(T).
        \end{align*}
        Hence we further obtain
        \[
            1-\zeta\leq\frac{(k-1)^2}{|L|^2}\Big(|I|+2\,\IPL(T)\Big).
        \]
        As mentioned in Section \ref{sec:haar-like-wavelet-basis}, $|L|=1+(k-1)(|I|-1)$. Hence $|L|\ge|I|$; otherwise, $|I|<1$, which is not possible because $k\ge2$ and $\circ\in I$. The theorem is now a direct consequence of the above inequality.
    \end{proof}
 
    \begin{corollary}\label{cor:zeta->1}
        If $\IPL( T) \ll |I|^2$ as $| T|\rightarrow\infty$, then $\zeta = 1 - \text{o}(1)$.
    \end{corollary}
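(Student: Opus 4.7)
The plan is to deduce the corollary as an essentially immediate consequence of Theorem~\ref{thm:kreg-zeta}. That theorem already gives the inequality
\[
    1 - \zeta \le (k-1)^2 \frac{1}{|I|} + 2(k-1)^2 \frac{\IPL(T)}{|I|^2},
\]
so the task reduces to verifying that each of the two terms on the right-hand side tends to $0$ as $|T|\to\infty$ (for fixed $k\ge2$).

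First I would note that, by the leaf-count identity $|L| = 1 + (k-1)(|I|-1)$ already used in Section~\ref{sec:haar-like-wavelet-basis}, one has $|T| = |I| + |L| = k|I| - k + 2$, so $|I| = (|T|+k-2)/k$. Consequently $|I|\to\infty$ whenever $|T|\to\infty$, which immediately forces the first term $(k-1)^2/|I|$ to be $o(1)$, since $k$ is fixed.

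Next I would handle the second term directly from the hypothesis: the assumption $\IPL(T)\ll |I|^2$ is exactly the statement that $\IPL(T)/|I|^2 = o(1)$, so $2(k-1)^2\,\IPL(T)/|I|^2$ is $o(1)$ as well. Adding the two $o(1)$ bounds yields $1-\zeta = o(1)$, i.e., $\zeta = 1 - o(1)$, which is the desired conclusion.

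There is no real obstacle here; the only point worth emphasizing in the write-up is that $k$ being constant in the regime $|T|\to\infty$ is what makes the factor $(k-1)^2$ harmless, and that the linear relationship between $|I|$ and $|T|$ guarantees $|I|\to\infty$ so that the $1/|I|$ term contributes nothing asymptotically.
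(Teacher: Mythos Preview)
Your proposal is correct and matches the paper's approach: the corollary is stated immediately after Theorem~\ref{thm:kreg-zeta} with no separate proof, since it follows at once from the inequality there. Your only addition is the explicit check that $|I|\to\infty$ as $|T|\to\infty$ via $|T|=k|I|-k+2$, which is a welcome clarification.
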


    \begin{example}[Perfect $k$-Regular Trees.]
    \label{ex:perfect-kreg}
        These are $k$-regular trees in which every leaf has the same depth. Let $T$ be one such tree with height $(h+1)$. At each depth $j \geq 1$, there are $k^{j-1}$ nodes. So
        \[
            |I| = 1 + \sum_{j=1}^{h}k^{j-1} = \frac{k + k^h - 2}{k - 1},\text{ and } \IPL(T) = \sum_{j=1}^{h}jk^{j-1} = \frac{hk^{h+1} - (h+1)k^h + 1}{(k-1)^2}.
        \] 
         We find that
        \[
            \frac{\IPL(T)}{|I|^2} = \frac{hk^{h+1} - (h+1)k^h + 1}{(k + k^h - 2)^2}\sim\frac{h}{k^{h-1}}.
        \]
        Hence $\IPL(T)=o\big(|I|^2\big)$, as $h\to\infty$, and Corollary \ref{cor:zeta->1} implies that the Haar-like basis asymptotically diagonalizes the covariance matrix of perfect $k$-regular trees as their height tends to infinity. 
    \end{example}
    
    \begin{example}[$k$-Regular Caterpillar Trees]
    \label{ex:caterpillar-kreg}
        These are $k$-regular trees in which the parent of every leaf is a node in a central path graph. Let $T$ be a $k$-regular caterpillar tree of height $(h+1)$ so that its central path has length $h$. Since there is only one node at depth 1, and at each depth $2 \leq j\leq h$, there are $(k-1)$ leaves and one interior node, we have
        \[
            |I| = 2 + h,\text{ and }\IPL(T) = \sum_{j=1}^{h}j = \frac{h(h+1)}{2}.
        \]
        So
        \[
            \frac{\IPL(T)}{|I|^2} = \frac{h(h+1)}{2(h+2)^2}\sim\frac{1}{2},
        \]
        as $h\rightarrow \infty$, and the lower-bound for $\zeta$ in Theorem~\ref{thm:kreg-zeta} is trivial (in fact, strictly negative), and we cannot guarantee that the covariance matrix associated with $T$ is sparsified to a significant degree as $|T|\rightarrow \infty$. 
    \end{example}

    The second example shows that covariance matrices of some $k$-regular trees do not meet the criterion for significant sparsification given by Theorem~\ref{thm:kreg-zeta}; moreover, the tree in Example~\ref{ex:perfect-kreg} is exceptionally constrained. So, the question remains whether trees meeting the sparsification criterion are at all common. To this end, we note that the guarantee provided by Theorem \ref{thm:kreg-zeta} is solely based on the tree's topology; i.e., the edge lengths are irrelevant. Therefore, we may investigate our remaining question by considering random $k$-regular trees irrespective of their edge lengths.
    
\subsection{Interlude on Hypergeometric Functions}

    For a concise introduction to hypergeometric functions, see~\cite{Aba99,WeissteinHypergeometric}.

    A hypergeometric function is one whose power series is hypergeometric; that is, its ratio of consecutive coefficients indexed by $n$ is a rational function of $n$. In particular, a power series $\sum_{n=0}^\infty f_n\,z^n$ is hypergeometric when there are constants $a_1,\ldots,a_p,b_1,\ldots,b_q$, for some integers $p,q\ge1$, such that
    \[
        \frac{f_{n+1}}{f_n} = \frac{1}{n+1}\cdot\frac{(n + a_1)\cdots(n + a_p)}{(n + b_1)\cdots(n + b_q)}.
    \]
  The coefficients $f_n$ may be written in terms of the Pochhammer symbol (defined such that $(c)_0:=1$ and $(c)_n:=\prod_{i=0}^{n-1}(c+i)$ for each integer $n\geq 1$) as
    \[
        f_n=\frac{1}{n!}\cdot\frac{(a_1)_n\cdots(a_p)_n}{(b_1)_n\cdots(b_q)_n},\text{ for all }n\ge0.
    \]
    The above hypergeometric function is denoted ${}_{p}F_{q}(a_1, a_2, \dots, a_p; b_1, b_2, \dots, b_q; z)$. A hypergeometric function of this form is said to be \textit{$s$-balanced} if 
    \[
        \sum_{j=1}^{q}b_j - \sum_{j=1}^{p}a_j = s,
    \]
    and if $s > 0$, the series converges at $z = 1$~\cite{RonSta84}.

    The following result will be crucial to identifying dominant singularities of the generating function enumerating internal path length of $k$-regular trees, as well as addressing their uniqueness in the closure of the disk of convergence. In stating this result, we use the following notation:
    \begin{equation}\label{def:p(t)}
        p(t):=t(1-t)^{k-1},\text{ for all }t\in\CC.
    \end{equation}
    and
    \begin{equation}\label{def:zk}
        z_k:=p\left(k^{-1}\right)=\frac{(k-1)^{k-1}}{k^k}.
    \end{equation}
        
    \begin{proposition}[Reformulation of  {\cite[Identity (25)]{WeissteinHypergeometric}}]
    \label{prop:Glasser}
        If $k\ge2$ then the hypergeometric function ${}_{k-1}F_{k-2}(\frac{1}{k}, \dots, \frac{k-1}{k}; \frac{k-2}{k-1},\frac{k}{k-1}; \frac{z}{z_k})$ is analytic in the disc $|z|<z_k$ and continuous in $|z|\le z_k$. Furthermore,
        \[
            \pFq{k-1}{k-2}{\frac{1}{k}, \dots, \frac{k-1}{k}}{\frac{2}{k-1}, \dots, \frac{k-2}{k-1},\frac{k}{k-1}}{\frac{p(t)}{p\left(\frac{1}{k}\right)}} = \frac{1}{1 - t},\text{ for all }0\leq t\leq\frac{1}{k}.
        \]
    \end{proposition}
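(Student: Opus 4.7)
The plan is to prove the two claims — analyticity/continuity on the closed disc and the closed-form identity — separately.

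For the first claim, the hypergeometric series converges absolutely on $|z/z_k|<1$, giving analyticity on $|z|<z_k$. For continuity on the closed disc, I would compute the parametric excess of the series: the upper parameters sum to $\sum_{j=1}^{k-1}(j/k) = (k-1)/2$, and the lower parameters sum to $(k(k-1)/2)/(k-1) = k/2$, so the excess is $k/2 - (k-1)/2 = 1/2 > 0$. Since the series is $(1/2)$-balanced, Gauss's convergence theorem for $_pF_{p-1}$ gives absolute, hence uniform, convergence on $|z/z_k|\le 1$; the sum is therefore continuous on $|z|\le z_k$.

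For the identity, the key idea is that $y(z):=1/(1-t(z))$, where $t(z)$ is the analytic branch of $p(t)=z$ near the origin with $t(0)=0$, is the generating function of $k$-ary trees. Since $p$ is strictly increasing on $[0,1/k]$ with $p(1/k)=z_k$, this branch furnishes a bijection between $[0,z_k]$ and $[0,1/k]$. The equation $p(t)=z$ rewrites as $t = z\,\phi(t)$ with $\phi(t)=(1-t)^{-(k-1)}$, which is the Lagrange form. Applying the Lagrange inversion formula to $H(t)=1/(1-t)$ yields, for $n\ge 1$,
\[
[z^n]\,y(z) = \frac{1}{n}\bigl[t^{n-1}\bigr](1-t)^{-(n(k-1)+2)} = \frac{1}{n}\binom{nk}{n-1} = \frac{1}{(k-1)n+1}\binom{kn}{n},
\]
with $[z^0]y = 1$.

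It then remains to verify that these coefficients coincide with those of the hypergeometric series evaluated at $z/z_k$. The ratio of successive coefficients of $y$ computes to
\[
\frac{c_{n+1}}{c_n} = \frac{1}{n+1}\cdot\frac{\prod_{j=1}^{k}(kn+j)}{\prod_{j=2}^{k}((k-1)n+j)};
\]
factoring $k^k$ out of the numerator and $(k-1)^{k-1}$ out of the denominator produces the constant $1/z_k = k^k/(k-1)^{k-1}$. Extracting the shared factor $(n+1)$ from the top product (arising as $n+k/k$) cancels the $1/(n+1)$, and, for $k\ge 3$, extracting the factor $(n+1)=n+(k-1)/(k-1)$ from the bottom product reveals the remaining factors as precisely those of the hypergeometric ratio with the stated parameters. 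Since both series have constant term $1$, they agree as power series and therefore as analytic functions on $|z|<z_k$, proving the identity for $t\in[0,1/k)$; continuity from the first part extends it to $t=1/k$. The main obstacle is the parameter-matching bookkeeping — tracking how the $(n+1)$ factors arise and cancel so that the residue is exactly $\{1/k,\dots,(k-1)/k\}$ on top and $\{2/(k-1),\dots,(k-2)/(k-1),k/(k-1)\}$ on the bottom — and verifying that the $k=2$ case (where one product is vacuous) reduces consistently to $c_{n+1}/c_n = 4(n+1/2)/(n+2)$, the Catalan recurrence.
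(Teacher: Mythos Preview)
Your proposal is correct and follows essentially the same route as the paper: both arguments obtain the Fuss--Catalan coefficients $\frac{1}{(k-1)n+1}\binom{kn}{n}$ via Lagrange inversion, identify the series as the stated ${}_{k-1}F_{k-2}$ by computing the ratio $c_{n+1}/c_n$, and establish continuity on the closed disc from the balance $s=1/2$. The only cosmetic difference is that the paper begins with the implicit functional equation $F=1+zF^k$ and derives $F(p(t))=1/(1-t)$ from it at the end, whereas you start from $y(z)=1/(1-t(z))$ and apply Lagrange inversion to that directly; the substitution $F=1/(1-t)$ makes the two setups equivalent.
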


    \begin{proof}  
        We attribute some of the ideas in this proof to I. Gessel~\cite{Gessel}.
    
        Consider the functional equation
        \begin{equation}\label{ide:FctEqF(z)}
            F(z)=1+z\{F(z)\}^k,
        \end{equation}
        with $F$ analytic in an open neighborhood of $z=0$. The Lagrange Inversion Theorem implies that  (\ref{ide:FctEqF(z)}) has a unique solution in some open neighborhood of $z=0$, with coefficients given by: 
        \begin{equation}\label{ide:FussCatalan}
            [z^n]\,F(z) = \frac{1}{(k - 1)n + 1}{kn \choose n},
        \end{equation}
        the so-called Fuss-Catalan numbers with parameter $k$. We find
        \[
            \lim_{n\to\infty}\frac{[z^{n+1}]\,F(z)}{[z^n]\,F(z)}=\frac{k^k|z|}{(k-1)^{k-1}},
        \]
        implying that $F(z)$ has radius of convergence $z_k$ and, due to the Vivanti-Pringsheim Theorem~\cite[Theorem 5.7.1]{Hil59}, $z_k$ is a singular point of $F$.
        
        On the other hand, the ratio of consecutive terms in $F(z)$ is
        \begin{align*}
            \frac{z^{n+1}\,[z^{n+1}]\,F(z)}{z^n\,[z^n]\,F(z)}
            &= \frac{\left(n + \frac{1}{k}\right)\cdots\left(n + \frac{k-1}{k}\right)}{\left(n + \frac{2}{k-1}\right)\cdots\left(n + \frac{k-2}{k-1}\right)\cdot\left(n + \frac{k}{k-1}\right)}\cdot\frac{z/z_k}{n + 1},
        \end{align*}
        which reveals
        \begin{equation}\label{ide:hypergeometric}
            F(z) = \pFq{k-1}{k-2}{\frac{1}{k},\dots,\frac{k-1}{k}}{\frac{2}{k-1},\dots,\frac{k-2}{k-1},\frac{k}{k-1}}{\,\frac{z}{z_k}},
        \end{equation}
        being the only solution of (\ref{ide:FctEqF(z)}) in the disk $|z|<z_k$. Moreover, since the balance of this hypergeometric function is
        \[
            s = \left(\sum_{j=2}^{k-2}\frac{j}{k-1}\right) + \frac{k}{k-1} - \sum_{j=1}^{k-1}\frac{j}{k} =\frac{1}{2}>0,
        \]
        it follows from~\cite{RonSta84} that the series of $F(z)$ converges at $z=z_k$. Since $F$ has non-negative coefficients, it follows that the series is absolutely convergent for all $|z|\le z_k$, 
        implying $F$ is analytic for $|z|<z_k$ and continuous for $|z|\le z_k$. 

        To complete the proof, notice that $F(z)>0$ for all $0\le z\le z_k$. From the functional equation (\ref{ide:FctEqF(z)}) we have that
        \[
            z = \left(1 - \frac{1}{F(z)}\right)\left(\frac{1}{F(z)}\right)^{k-1} = p\left(\frac{F(z) - 1}{F(z)}\right),\text{ for all }0\le z\le z_k,
        \]
        but $p'(t)\ge0$ for all $0\le t\le k^{-1}$, with equality only at $t=k^{-1}$, and $p(k^{-1})=z_k$. Therefore,
        \[
            p^{-1}(t)=\frac{F(t)-1}{F(t)},\text{ i.e., }F(t) = \frac{1}{1 - p^{-1}(t)},\text{ for all }t\in\left[0,\frac{1}{k}\right],
        \]
        or, equivalently,
        \[
            F\big(p(t)\big) = \frac{1}{1 - t},\text{ for all }t\in\left[0,\frac{1}{k}\right],\]
        which finalizes the proof. 
    \end{proof} 
    
    Our next result complements the previous one by characterizing part of the range of the polynomial $p(t)$ for $|t|\le k^{-1}$ so long as $k$ is large enough. We recall that $z_k$ is as defined in (\ref{def:zk}).

    \begin{figure}[b!]
        \centering
        \begin{tabular}{ccc}
            \includegraphics[scale=0.2]{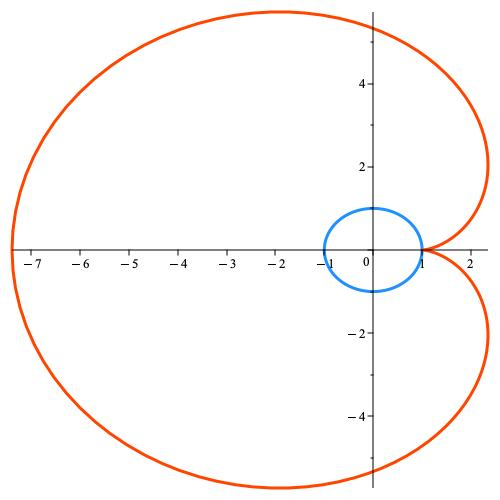} &
            & \includegraphics[scale=0.2]{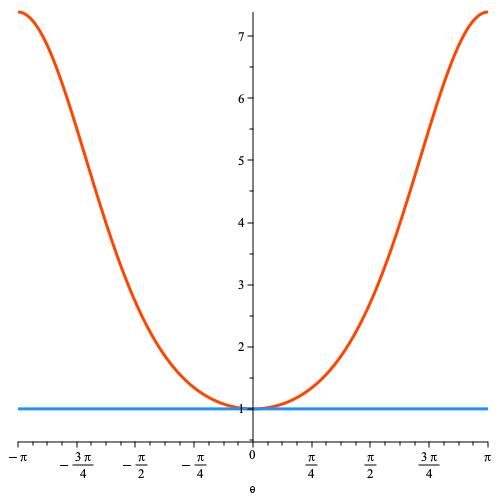}\\  
            (a) & \hspace{1cm} & (b)\\ \\
            \includegraphics[scale=0.2]{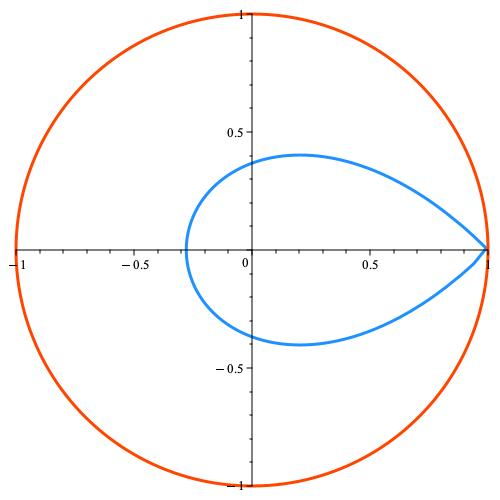}
            & \hspace{1cm} &  
            \includegraphics[scale=0.2]{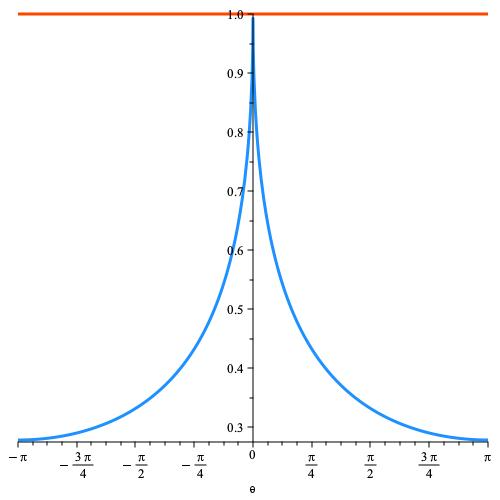}\\
            (c) & \hspace{1cm} & (d)
        \end{tabular}
        \caption{\textbf{(a)} Range of $\tau e^{1-\tau}$ for $|\tau|=1$ (red), and circle $|\tau|=1$ (blue). \textbf{(b)} Plots of $|\tau e^{1-\tau}|$ (red) and $|\tau|$ (blue), with $\tau=e^{i\theta}$, for $-\pi\le\theta\le\pi$. \textbf{(c)} Circle $|\tau|=1$ (red), and range of $-W(-x/e)$ for $|x|=1$ (blue). \textbf{(d)} Plots of $|x|$ (red) and $|W(-x/e)|$ (blue), with $x=e^{i\theta}$, for $-\pi\le\theta\le\pi$ (blue).}
        \label{fig:LambertW}
    \end{figure}
    
    \begin{proposition}
    \label{prop:tricky}
        For all $k$ large enough, if $z\ne z_k$ is such that $|z|=z_k$ then there is a $t$, with $|t|< k^{-1}$, such that $z=p(t)$.
    \end{proposition}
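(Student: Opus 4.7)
The plan is to reduce the statement, via the change of variable $t=\tau/k$ together with $z=w\,z_k$, to showing that for every $w$ with $|w|=1$ and $w\ne 1$ there exists $\tau$ with $|\tau|<1$ and $\tilde p(\tau)=w$, where
\[
    \tilde p(\tau) := \frac{p(\tau/k)}{z_k} = \tau\left(\frac{1-\tau/k}{1-1/k}\right)^{k-1}.
\]
Note that $\tilde p$ is a polynomial in $\tau$ with $\tilde p(0)=0$ and $\tilde p(1)=1$.

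The heart of the argument is a pointwise modulus estimate on the unit circle. Using $|1-e^{i\theta}/k|^2 = 1-2\cos\theta/k+1/k^2$ and $(1-1/k)^2 = 1-2/k+1/k^2$, one computes
\[
    \bigl|\tilde p(e^{i\theta})\bigr|^2 = \left(\frac{1-2\cos\theta/k+1/k^2}{1-2/k+1/k^2}\right)^{k-1},
\]
and the ratio in parentheses is $\ge 1$ with equality iff $\cos\theta=1$. Hence $\tilde p$ sends the unit circle into $\{|w|\ge 1\}$ and meets $\{|w|=1\}$ only at $w=\tilde p(1)=1$. This is consistent with Figure~\ref{fig:LambertW}(a)--(b), since as $k\to\infty$ the above expression tends to $|e^{i\theta}e^{1-e^{i\theta}}|^2 = e^{2(1-\cos\theta)}$.

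With this estimate in hand, write $\mathbb{D}:=\{|\tau|<1\}$. Since $\tilde p$ is a non-constant polynomial, $\tilde p(\mathbb{D})$ is open and $\overline{\tilde p(\mathbb{D})}=\tilde p(\overline{\mathbb{D}})$, so $\partial\tilde p(\mathbb{D})\subseteq\tilde p(\partial\mathbb{D})\subseteq\{|w|\ge 1\}$. Consequently $\tilde p(\mathbb{D})\cap\{|w|<1\}$ is open in $\{|w|<1\}$, and its complement in $\{|w|<1\}$ is disjoint from $\overline{\tilde p(\mathbb{D})}$ and therefore also open. Since $\{|w|<1\}$ is connected and contains $0=\tilde p(0)$, this forces $\{|w|<1\}\subseteq\tilde p(\mathbb{D})$. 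For $|w|=1$ with $w\ne 1$, I would approximate by a sequence $w_n\to w$ with $|w_n|<1$, pick preimages $\tau_n\in\mathbb{D}$ with $\tilde p(\tau_n)=w_n$, and extract a convergent subsequence $\tau_{n_j}\to\tau^*\in\overline{\mathbb{D}}$; continuity gives $\tilde p(\tau^*)=w$. The modulus estimate forbids $|\tau^*|=1$, since otherwise either $\tau^*=1$ (forcing $w=1$, excluded) or $|w|>1$ (contradiction). Hence $|\tau^*|<1$, as desired.

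I do not foresee any serious obstacle: the only non-routine step is the pointwise estimate, which is a transparent algebraic manipulation. As a side remark, this approach in fact yields the conclusion for every $k\ge 2$; the hypothesis ``$k$ large enough'' in the statement likely reflects an alternative strategy suggested by Figure~\ref{fig:LambertW}(c)--(d), in which one inverts the limiting map $f(\tau)=\tau e^{1-\tau}$ via the principal Lambert branch $\tau=-W_0(-w/e)$ and then perturbs $f$ to $\tilde p$ for large $k$.
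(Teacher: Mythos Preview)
Your proof is correct and follows a genuinely different route from the paper. The paper proceeds exactly as you anticipate in your closing remark: after the same change of variables, it fixes $x$ on the unit circle and compares the polynomial $\tilde p(\tau)-x$ with the limiting function $f(\tau)=x-\tau e^{1-\tau}$ via Rouch\'e's theorem on $|\tau|=1$, then locates the zero of $f$ inside the disk through the principal branch of the Lambert $W$ function, $\tau=-W_0(-x/e)$. The hypothesis ``for all $k$ large enough'' is there precisely because Rouch\'e requires $\sup_{|\tau|=1}|g_k(\tau)|<\inf_{|\tau|=1}|f(\tau)|$, which the paper obtains only from the uniform convergence $g_k\to 0$. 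Your argument bypasses both Rouch\'e and the Lambert function by computing $|\tilde p(e^{i\theta})|$ explicitly for every $k$ and then using the open mapping theorem together with a connectedness--compactness argument; the payoff is a shorter, self-contained proof that in fact establishes the result for all $k\ge 2$, while the paper's route ties the problem to the classical Lambert inversion and to the picture in Figure~\ref{fig:LambertW}.
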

    
    \begin{proof}[Proof]
Consider the auxiliary variables $x=z/z_k$ and $\tau=k\cdot t$. In these variables, the Proposition states that for each $x\ne1$, with $|x|=1$, there is $\tau$, with $|\tau|<1$, such that
        \[
            x=\tau\cdot\left(1+\frac{1-\tau}{k-1}\right)^{k-1}.
        \]
        
        Fix $|x|=1$, $x\ne1$, and consider the functions
        \begin{align*}
            f(\tau)&:=x-\tau e^{1-\tau};\\
            g_k(\tau)&:=\tau e^{1-\tau}-\tau\cdot\left(1+\frac{1-\tau}{k-1}\right)^{k-1}.
        \end{align*}
        
        Since $|\tau e^{1-\tau}|\ge1$ for all $|\tau|=1$, with equality only when $\tau=1$ (see Figure~\ref{fig:LambertW}(a-b)), the continuity of $f$ implies that $\inf_{\tau:\,|\tau|=1}|f(\tau)|>0$. On the other hand, using the exp-log transform, it follows that $\lim_{k\to\infty}g_k(\tau)=0$, uniformly for all $|\tau|=1$. In particular, for all $k$ large enough, we have that
        \[
            \sup_{\tau:|\tau|=1}|g_k(\tau)|<\inf_{\tau:\,|\tau|=1}|f(\tau)|.
        \]
        Roch\'e's Theorem then implies that $f(\tau)+g_k(\tau)$ has the same number of zeroes as $f(\tau)$ in the disk $|\tau|<1$. But the equation $f(\tau)=0$ is equivalent to
        \[
            \frac{x}{e}=\tau\cdot e^{-\tau}.
        \]
        This equation has an infinite number of solutions; however, one of them is $\tau=-W(-x/e)$, where $W$ is the principal branch of the so-called Lambert W function~\cite{COR96}. As seen in Figures~\ref{fig:LambertW}(c-d), however, as long as $x\ne1$, if $|x|=1$ then $|W(-x/e)|<1$, from which the Proposition follows.
    \end{proof}

\subsection{Expectation and Variance of Internal Path Length}

    Corollary~\ref{cor:zeta->1} establishes that a sufficient condition for the Haar-like basis of a $k$-regular tree to diagonalize its (phylogenetic) covariance matrix asymptotically, as the tree grows, is that its internal path length becomes negligible compared to the square of the number of its interior points. To assess the prevalence and extent of trees meeting this criterion, we need asymptotic estimates for the mean and variance of large, uniformly at random, $k$-regular trees, which is precisely what our next result addresses. We emphasize that related results exist in the literature---see~\cite[Proposition VII.3]{FlaSed09}, \cite[Theorem 3.1]{MeiMoo78}, and \cite[Theorem 2.19]{Drm09}. Our new contribution is the derivation of leading asymptotic estimates---with explicit multiplicative constants---in terms of $k$.

    \begin{theorem}\label{thm:kreg-asym}
        For a uniformly at random $k$-regular tree $T$ with $|I|$ internal nodes, the expectation and variance of its internal path length satisfy the asymptotic estimates 
        \begin{equation}\label{thm:EIPLT}
            \mathbb{E}[\IPL(T)] = \sqrt{\frac{\pi k |I|^3}{2(k-1)}}\,\big(1 + \text{O}(|I|^{-1/2})\big),   
        \end{equation}
        and 
        \begin{equation}\label{thm:VIPLT}
            \mathbb{V}[\IPL(T)] = \frac{k}{2(k-1)}\Big(\frac{10}{3} - \pi\Big)|I|^3\,\big(1 + \text{O}(|I|^{-1/2})\big).
        \end{equation}
    \end{theorem}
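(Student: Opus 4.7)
The plan is to apply the symbolic method to random $k$-regular trees. Let $Y(z, u)$ be the bivariate generating function in which $z$ marks internal nodes and $u$ marks contributions to the internal path length. A root decomposition---under which sit $k$ subtrees whose internal nodes each increase in depth by one---yields the functional equation $Y(z, u) = 1 + z\, Y(zu, u)^k$, with $Y(z, 1) = F(z)$, the generating function from Proposition~\ref{prop:Glasser}. Differentiating at $u = 1$ and using $F'(z) = F(z)^k/(1 - zk F(z)^{k-1})$ (itself obtained from differentiating $F = 1 + zF^k$) gives the closed form
\[
G(z) := \partial_u Y(z, 1) = \frac{z^{2} k\, F(z)^{2k-1}}{\bigl(1 - z k\, F(z)^{k-1}\bigr)^{2}}.
\]
A second differentiation produces an analogous rational expression for $H(z) := \partial_u^{2} Y(z, 1)$ with denominator $(1 - zkF^{k-1})^{4}$. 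Then $\mathbb{E}[\IPL(T)] = [z^n] G(z)/[z^n] F(z)$ and $\mathbb{E}[\IPL(T)(\IPL(T) - 1)] = [z^n] H(z)/[z^n] F(z)$, where $n = |I|$.

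I would then carry out a singularity analysis in the $t$-coordinate of Proposition~\ref{prop:Glasser}, in which $F = 1/(1 - t)$, $z = p(t) = t(1 - t)^{k-1}$, and $1 - zkF^{k-1} = 1 - kt$. Since $p'(1/k) = 0$ and $p''(1/k) = -k(1 - 1/k)^{k-2}$, a Taylor expansion gives $1 - kt \sim \sqrt{2(k-1)(1 - z/z_k)/k}$ as $z \to z_k$, hence the singular expansion
\[
F(z) = \frac{k}{k-1} - \sqrt{\frac{2k}{(k-1)^{3}}}\sqrt{1 - z/z_k} + \mathrm{O}(1 - z/z_k).
\]
Substituting into the closed forms yields $G(z) \sim \frac{k}{2(k-1)^{2}}\cdot\frac{1}{1 - z/z_k}$ and $H(z) \sim c_k / (1 - z/z_k)^{3/2}$ for an explicit rational $c_k$ in $k$. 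Proposition~\ref{prop:tricky} ensures that, for all $k$ sufficiently large, $z_k$ is the unique singularity of $F, G, H$ on the circle $|z| = z_k$, so the Flajolet--Sedgewick transfer theorems apply.

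Extracting coefficients, Stirling applied to the Fuss--Catalan formula $[z^n]F = \frac{1}{(k-1)n + 1}\binom{kn}{n}$ gives $[z^n] F(z) \sim (2\pi)^{-1/2}\sqrt{k/(k-1)^{3}}\, n^{-3/2}\, z_k^{-n}$, while the transfer theorems yield $[z^n] G(z) \sim \frac{k}{2(k-1)^{2}}\, z_k^{-n}$ and $[z^n] H(z) \sim \frac{c_k}{\sqrt{\pi}}\, n^{1/2}\, z_k^{-n}$. The expectation estimate $\sqrt{\pi k/(2(k-1))}\, |I|^{3/2}$ drops out after a direct cancellation upon dividing. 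The variance is then obtained from $\mathbb{V}[\IPL] = \mathbb{E}[\IPL(\IPL - 1)] + \mathbb{E}[\IPL] - \mathbb{E}[\IPL]^{2}$: the first term contributes $C\, |I|^{3}$ for an explicit $C$ built from $c_k$, the last subtracts $(\pi k/(2(k-1)))\, |I|^{3}$, and simplification must produce $(k/(2(k-1)))(10/3 - \pi)\, |I|^{3}$. The $\mathrm{O}(|I|^{-1/2})$ error terms come from the next-order corrections in the singular expansion of $F$, which propagate through $G$ and $H$.

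The hard part will be pinning down $c_k$ exactly, because the rational number $10/3$ in the variance asymptotic is not produced by the leading singular term of $H$ alone: a careful two-term expansion of $F$ near $z_k$ must be pushed through the closed forms for $G$ and $H$, and the lower-order contributions must combine with the $\mathbb{E}[\IPL]^{2}$ subtraction to yield the specific numerical constant $10/3 - \pi$ with the correct $k$-dependent prefactor. Tracking these constants without losing the delicate $(k-1)$-versus-$k$ bookkeeping is the principal source of technical work.
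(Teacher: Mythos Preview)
Your overall strategy matches the paper's: set up the bivariate generating function, differentiate at $u=1$, and do a singularity analysis at $z_k$ via the substitution $z=p(t)$. The expectation part is correct and essentially identical to what the paper does.

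There is, however, a concrete error in the variance computation. The second $u$-derivative $H(z)=\partial_u^2 Y(z,1)$ does \emph{not} have denominator $(1-zkF^{k-1})^{4}$; the paper computes it explicitly and the dominant term carries $(1-zkF^{k-1})^{5}$ in the denominator (there are three terms, with denominators of orders $3$, $4$, and $5$). Since $1-zkF^{k-1}\sim c\sqrt{1-z/z_k}$, this means $H(z)\sim c_k(1-z/z_k)^{-5/2}$, not $(1-z/z_k)^{-3/2}$, and hence $[z^n]H(z)\asymp n^{3/2}z_k^{-n}$, not $n^{1/2}z_k^{-n}$. Your stated exponents would yield $\mathbb{E}[\IPL(\IPL-1)]\asymp n^{2}$ after dividing by $[z^n]F\asymp n^{-3/2}z_k^{-n}$, which is incompatible with the claimed $n^{3}$ order of the variance. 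So the place you flagged as ``the hard part'' is in fact already broken one step earlier: redo the implicit differentiation and track the extra factor of $(1-zkF^{k-1})^{-1}$ that appears when you differentiate $z F'(z)$-type terms.

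A smaller point: you invoke Proposition~\ref{prop:tricky} to secure uniqueness of the dominant singularity, but that result is stated only for $k$ sufficiently large. The paper instead argues directly from strict positivity of the coefficients of $Q$: if $|z|=z_k$ and $z\ne z_k$ then $|Q(z)|<Q(z_k)$, whence $|kz\{Q(z)\}^{k-1}|<1$. This elementary inequality gives the needed uniqueness for every $k\ge 3$ (with $k=2$ handled in the earlier reference), so you do not need Proposition~\ref{prop:tricky} here at all.
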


    \begin{proof}[Proof (Sketch)]
        We establish the formula for the asymptotic expected value and provide only a sketch of the proof for the variance. A detailed proof of the latter will be included in the full version of this extended abstract~\cite{SviLla24+}. Additionally, since the asymptotic formulas in the Theorem agree with those in~\cite{GorLla23} for $k=2$, we assume henceforth that $k\ge3$.
        
        Let $Q(z,u)$ be the generating function for the class of $k$-regular trees, where the variable $z$ marks the number of internal nodes and $u$ the internal path length of a tree. We find using the methods of~\cite[Section III]{FlaSed09} that $Q(z,u)$ obeys the functional equation 
        \begin{equation}\label{ide:Qfcteq}
            Q(z, u) = G\big(z,Q(zu,u)\big)\text{ where } G(z,w):=1+z \,w^k.
        \end{equation}
        For brevity, we define $Q(z):=Q(z,1)$ and $Q_u(z):=\frac{\partial Q}{\partial u}(z,1)$. We are interested in the following:
        \begin{equation}\label{ide:EIPLT}
            \mathbb{E}[\IPL(T)] = \frac{[z^n]Q_u(z)}{[z^n]Q(z)}. 
        \end{equation}
        The asymptotic formula in (\ref{thm:EIPLT}) follows from a detailed asymptotic analysis of the numerator and denominator above. To this effect, we first note from (\ref{ide:Qfcteq}) that
        \begin{equation}\label{ide:FctEqQ(z)}
            Q(z)=1+z\{Q(z)\}^k.    
        \end{equation}
        It follows from the proof of Proposition~\ref{prop:Glasser} that $Q(z)$ is a hypergeometric function of the form
        \begin{equation}
        \label{ide:hypergeometric}
            Q(z) = \pFq{k-1}{k-2}{\frac{1}{k},\dots,\frac{k-1}{k}}{\frac{2}{k-1},\dots,\frac{k-2}{k-1},\frac{k}{k-1}}{\,\frac{z}{z_k}},\text{ for all }|z|\le z_k.
        \end{equation}
        Moreover, since $z_k=p(k^{-1})$, we note the following corollary of Proposition~\ref{prop:Glasser}.

        \begin{corollary}
        \label{cor:Qzk}
            If $k\ge3$ then $Q(z_k) = \frac{k}{k-1}$; in particular, $k\,z_k\,\{Q(z_k)\}^{k-1}=1$.
        \end{corollary}
    
        We next show that $Q(z)$ fits the ``smooth implicit function schema''~\cite{FlaSed09} and provide the asymptotic order of the denominator in (\ref{ide:EIPLT}).
    
        \begin{lemma}
        \label{lemma:schemaQz}
            $z_k$ is the only singularity of $Q(z)$ on the disk $|z|\le z_k$ and, locally around $z_k$, $Q(z)$ admits the representation
            \[
                Q(z)=1+g(z)-h(z)\cdot\sqrt{1-\frac{z}{z_k}},
            \]
            for functions $g(z)$ and $h(z)$ analytic near $z_k$. Furthermore
            \[
                [z^n]\,Q(z) = \sqrt{\frac{k}{2\pi n^3(k-1)^3}}\cdot z_k^{-n}\big(1 + O(n^{-1})\big).
            \]
        \end{lemma}
        
        \begin{proof}
            Define $P(z):=Q(z)-1$. $P(0)=0$ and $P(z)=F\big(z,P(z)\big)$, where $F(z,p):=z\{1+p\}^k$. Note that $F(0,p)=0$ and $F$ has only non-negative Taylor coefficients around $(0,0)$. Moreover, due to Corollary~\ref{cor:Qzk}, $P(z_k)=\frac{1}{k-1}$, hence
            \[
                \begin{array}{clcl}
                F\big(z_k,P(z_k)\big)
                &=\frac{1}{k-1}; & 
                F_z\big(z_k,P(z_k)\big)
                &=\frac{k^k}{(k-1)^k}\ne0;\\
                F_p\big(z_k,P(z_k)\big)
                &= 1 ; &
                F_{pp}\big(z_k,P(z_k)\big)
                &= \frac{(k-1)^2}{k}\ne0.
                \end{array}
            \]
            In particular, $F\big(z_k,P(z_k)\big)=P(z_k)$ and, since $[z^n]\,P(z)>0$ for all $n\ge1$, \cite[Theorem 2.19]{Drm09} implies that that $z_k$ is the only singularity of $P(z)$ and hence of $Q(z)$ in the disk $|z|\le z_k$. Furthermore, it admits the singular expansion
            \[
                P(z)=g(z)-h(z)\cdot\sqrt{1-\frac{z}{z_k}}
            \] 
            where $g(z)$ and $h(z)$ are analytic in an open neighborhood of $z_k$, and
            \[
                [z^n]\,P(z)=\sqrt{\frac{z_k\,F_z\big(z_k,P(z_k)\big)}{2\pi\,F_{pp}\big(z_k,P(z_k)\big)}}\, z_k^{-n}\,n^{-3/2}\,\big(1+\text{O}(n^{-1})\big),
            \]
            from which the lemma follows.
        \end{proof}
        
        We now address the asymptotic behavior of the numerator in (\ref{ide:EIPLT}). For this, note that by implicit differentiation in (\ref{ide:FctEqQ(z)}), we find that
        \begin{equation}\label{ide:Quz}
            Q_u(z) = \frac{k\,z^2\,\{Q(z)\}^{2k - 1}}{\big(1 - kz\,\{Q(z)\}^{k - 1}\big)^2}.
        \end{equation}
    
        Our next result shows that $z_k$ is the only singularity of $Q_u(z)$ in the disk $|z|\le z_k$.
    
        \begin{lemma}
        \label{lem:den}
            The equation $kz\,\{Q(z)\}^{k-1}=1$, with $|z|\le z_k$, has only $z_k$ as a solution.
        \end{lemma}
        
        \begin{proof}
            Per Corollary~\ref{cor:Qzk}, we know that $z_k$ is a solution of the equation $kz\,\{Q(z)\}^{k-1}=1$. On the other hand, because $Q(z)$ has non-negative coefficients, $|kz\,\{Q(z)\}^{k - 1}|<k\,z_k\,\{Q(z_k)\}^{k-1}=1$ for $|z|<z_k$; in particular, any solution to the equation must lie on the circle $|z|=z_k$. But, if $|z|=z_k$ and $z\ne z_k$ then $|Q(z)|<Q(|z|)=Q(z_k)$ because $Q$ has all powers of $z$ with strictly positive coefficients. Hence, $|kz\{Q(z)\}^{k-1}|=kz_k|Q(z)|^{k-1}<kz_k\{Q(z_k)\}^{k-1}=1$, which completes the proof of the lemma.
        \end{proof}
    
        Finally, using (\ref{ide:Quz}) and Lemma~\ref{lem:den}, \cite[Theorem VI.4]{FlaSed09} implies that
        \[
            [z^n]\,Q_u(z, 1) = \frac{k}{2(k-1)^2}\cdot z_k^{-n}\big(1 + \text{O}(n^{-1/2})\big),
        \]
        from which the asymptotic formula for the expected internal path length of a uniformly at random $k$-regular tree in (\ref{thm:EIPLT}) follows.
        
        The asymptotic formula for the variance in (\ref{thm:VIPLT}) follows from a similar analysis based of the identity
        \begin{align*}
            Q_{uu}(z,1) 
            &= \frac{k(7k-1)z^3\{Q(z)\}^{3k-2}}{(1 - kz\{Q(z)\}^{k-1})^{3}}\\
            &\qquad\qquad\qquad+\,\frac{k^2(7k-1)z^4\{Q(z)\}^{4k-3}}{(1 - kz\{Q(z)\}^{k-1})^{4}}
            \,+\,\frac{5k^3(k-1)z^5\{Q(z)\}^{5k-4}}{(1 - kz\{Q(z)\}^{k-1})^5},
        \end{align*}
        also obtained via implicit differentiation in (\ref{ide:FctEqQ(z)}). This completes the proof of Theorem \ref{thm:kreg-asym}.
    \end{proof}
    
%

    \begin{figure}[t!]
        \captionsetup[subfigure]{font=footnotesize}
        \centering
        \begin{subfigure}[b]{.48\textwidth}
            \includegraphics[width=1\textwidth]{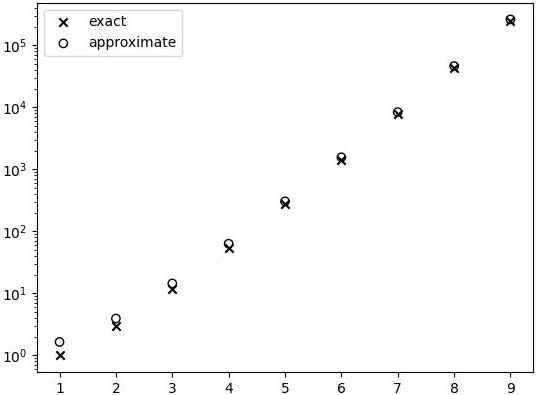}
            \caption{$k=3$}
        \end{subfigure}
        \hfill%
        \begin{subfigure}[b]{.48\textwidth}
            \includegraphics[width=\textwidth]{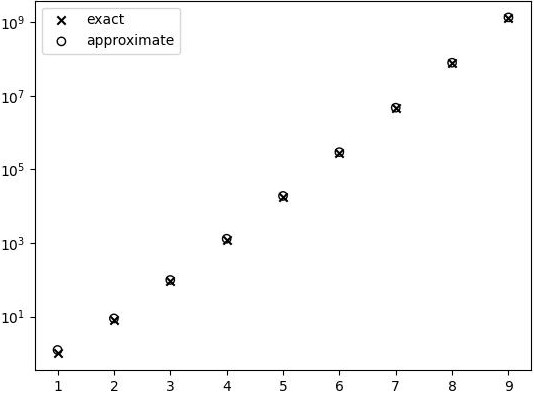}
            \caption{$k=8$}
        \end{subfigure}
        \caption{The first ten coefficients of the counting sequence associated with $Q(z)$, computed both by the exact formula or according to the asymptotic estimate.}
        \label{fig:counting-sequences}
    \end{figure}
    
\subsection{Sparsification of Large, Random, $k$-Regular Covariance Matrices}

    So far, we have seen that the Haar-like wavelets can partially sparsify the (phylogenetic) covariance matrix of a $k$-regular tree by changing the basis, obtaining a lower bound on the proportion of vanishing entries under the basis change. A remaining challenge is determining whether trees meeting the criteria for a high degree of sparsification are common. In this section, we find that such trees are, in fact, abundant and that large random $k$-regular trees are highly sparsified by the Haar-like wavelets with overwhelmingly high probability. 

    Our next result extends~\cite[Corollary 3.8]{GorLla23} to any $k\ge2$. We emphasize that the proof mirrors the one in~\cite{GorLla23}.

    \begin{corollary}
    \label{cor:kreg-sparse}
        Let $k\ge2$ and $T$ be a uniformly at random $k$-regular tree with $|I|$ internal nodes, and $C$ its covariance matrix. If $\Phi$ is the Haar-like matrix associate with $T$, and $\zeta$ represents the fraction of vanishing entries in $\Phi'C\Phi$, then $\zeta\rightarrow1$ in probability, as $|I|\rightarrow\infty$.
    \end{corollary}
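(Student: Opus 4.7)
The plan is to combine the deterministic bound of Theorem~\ref{thm:kreg-zeta} with the asymptotic control on the mean of $\IPL(T)$ provided by Theorem~\ref{thm:kreg-asym}, and then conclude via Markov's inequality; the argument is essentially the $k\ge2$ analogue of the one sketched for $k=2$ in~\cite{GorLla23}.

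First, Theorem~\ref{thm:kreg-zeta} yields the pathwise (deterministic) bound
\[
    0\leq 1-\zeta \leq (k-1)^2\,\frac{1}{|I|} \,+\, 2(k-1)^2\,\frac{\IPL(T)}{|I|^2},
\]
valid for every $k$-regular tree $T$. The first summand on the right-hand side is a deterministic quantity tending to $0$ as $|I|\to\infty$, so it suffices to prove that $\IPL(T)/|I|^2\to0$ in probability for a uniformly-at-random $k$-regular tree with $|I|$ internal nodes.

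Second, I would invoke the expectation estimate of Theorem~\ref{thm:kreg-asym}, namely
\[
    \mathbb{E}\bigl[\IPL(T)\bigr] = \sqrt{\frac{\pi k |I|^3}{2(k-1)}}\,\bigl(1+\text{O}(|I|^{-1/2})\bigr) = \text{O}\bigl(|I|^{3/2}\bigr),
\]
so that $\mathbb{E}\bigl[\IPL(T)/|I|^2\bigr]=\text{O}(|I|^{-1/2})$. Since $\IPL(T)\ge0$, Markov's inequality gives, for every fixed $\varepsilon>0$,
\[
    \Pr\!\left(\frac{\IPL(T)}{|I|^2}\ge \varepsilon\right) \leq \frac{\mathbb{E}\bigl[\IPL(T)\bigr]}{\varepsilon\,|I|^2} = \text{O}\bigl(|I|^{-1/2}\bigr)\xrightarrow[|I|\to\infty]{}0.
\]
(The variance estimate in Theorem~\ref{thm:kreg-asym} is not needed here; it could equally well be used via Chebyshev's inequality, but Markov already suffices.) Feeding this into the displayed inequality for $1-\zeta$ shows that $1-\zeta\to0$ in probability, equivalently $\zeta\to1$ in probability, as $|I|\to\infty$, which is the claim.

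There is really no single ``hard step'' here: all the heavy lifting has been done in the preceding results. The only point that requires a modicum of care is the non-negativity of $\IPL(T)$, which legitimises the use of Markov's inequality, and the observation that the bound in Theorem~\ref{thm:kreg-zeta} is pathwise (so one may condition on the random topology before applying the probabilistic estimate). If one prefers to avoid Markov altogether, Chebyshev using both the mean and variance estimates of Theorem~\ref{thm:kreg-asym} produces the same conclusion, with a slightly sharper rate of decay in $|I|$.
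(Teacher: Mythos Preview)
Your proof is correct. It is not quite the same argument as the paper's, and in fact it is more economical. The paper combines the \emph{mean and variance} estimates of Theorem~\ref{thm:kreg-asym} via Cantelli's one-sided inequality,
\[
    \Pr\bigl(\IPL(T) > \mu + t\sigma\bigr) \le \frac{1}{1+t^2},
\]
and then observes that $\mu+t\sigma=\Omega\bigl(t\,|I|^{3/2}\bigr)$ to conclude that $\IPL(T)/|I|^2\to0$ in probability by letting $t\to\infty$ with $t=o(\sqrt{|I|})$. Your argument instead uses only the mean estimate together with Markov's inequality, exploiting the non-negativity of $\IPL(T)$. Both routes feed into the same deterministic bound from Theorem~\ref{thm:kreg-zeta} and reach the same conclusion; what your approach buys is that the variance formula~(\ref{thm:VIPLT}) is seen to be unnecessary for this particular corollary, while the paper's Cantelli argument (or your suggested Chebyshev alternative) yields a somewhat sharper quantitative tail bound at the cost of requiring the second-moment asymptotics.
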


    \begin{proof}
        Let $\mu$ and $\sigma^2$ denote the expectation and variance of internal path length of $T$, respectively. Cantelli's inequality states
        \[
            P\big(\IPL( T) > \mu + t\sigma\big) \leq \frac{1}{1 + t^2},\text{ for all }t>0.
        \]
        
        From Theorem~\ref{thm:kreg-asym}, we know that $\mu + t\sigma = \Omega(t|I|^{3/2})$; that is, there is a constant $c > 0$ such that
        \[
            P\big(\IPL(T) > c\,t|I|^{3/2}\big)\leq \frac{1}{1 + t^2},
        \]
        or equivalently
        \[
            P\left(\frac{\IPL(T)}{|I|^2} \leq \frac{c\,t}{\sqrt{|I|}}\right) \geq \frac{t^2}{1 + t^2},\text{ for all }t>0.
        \]
        The result follows by choosing $t\rightarrow\infty$ so that $t = o(\sqrt{|I|})$.
    \end{proof}

    \begin{figure}[t!]
        \captionsetup[subfigure]{font=footnotesize}
        \centering
        \begin{subfigure}[b]{.48\textwidth}
            \centering
            \includegraphics[width=\textwidth]{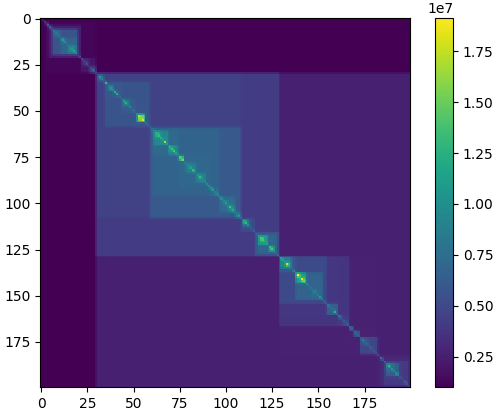}
            \caption{Dense covariance matrix.}
        \end{subfigure}\hfill
        \begin{subfigure}[b]{.48\textwidth}
            \centering
            \includegraphics[width=0.925\textwidth]{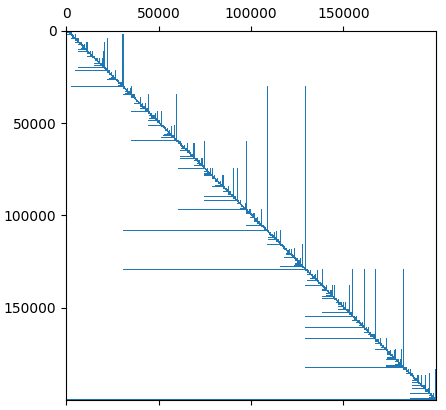}
            \caption{Sparsified covariance matrix.}
        \end{subfigure}
        \caption{(a) Heat-map visualization of a dense covariance matrix $C$, and (b) sparsity pattern of $\Phi'C\Phi$ for a 3-regular tree with 200,001 leaves. The dense matrix has over 40 billion non-zero entries. The sparse matrix, on the other hand, has only about 0.03\% as many non-zero entries. The heatmap of the dense matrix was produced downsampling the sparse representation by a factor of 1000, prior to undoing the change-of-basis.}
        \label{fig:sparsity-pattern}
    \end{figure}

\section{Discussion}
\label{sec:discussion}

    Phylogenetic covariance matrices are often large and dense to the point of being computationally unmanageable; however, we have demonstrated that expressing them in the Haar-like basis can significantly sparsify them. In particular, we have shown that the covariance matrix of a random $k$-regular tree will be highly sparsified with overwhelmingly high probability as the size of the tree tends to infinity. In this section, we illustrate the process of sparsifying a large phylogenetic covariance matrix and discuss some of the practical considerations of the method.

    \begin{wrapfigure}{r}{0.5\textwidth}
        \centering
        \includegraphics[width=0.5\textwidth]{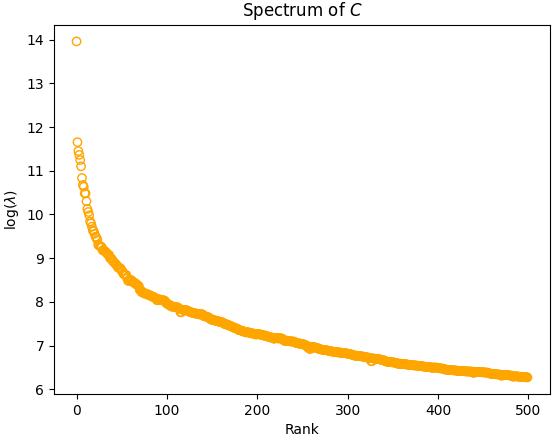}
        \caption{The largest 500 eigenvalues of a covariance matrix $C$ associated with a 3-regular tree with 200,001 leaves. The spectrum was computed from its sparse representation, obtained after changing to the Haar-like basis.}
        \label{fig:spectrum}
    \end{wrapfigure}

    Consider Figure \ref{fig:sparsity-pattern}, which illustrates the application of the Haar-like wavelet transform on a random $3$-regular tree with 200,001 leaves. The tree was generated by simulating a Galton-Watson process until a desired population size (i.e., number of leaves) was reached. We find that, although the original, dense matrix contains over 40 billion non-zero entries, the resulting sparse matrix contains a comparatively meager 11.3 million; that is, over 99.97\% of the entries in the original matrix were zeroed by the transform. Importantly, because $\Phi'C\Phi$ and $C$ are similar matrices, we can quickly calculate the spectrum of $C$ from its sparse representation, illustrated in Figure \ref{fig:spectrum}. 
    
    While this random tree model may not precisely induce actual phylogenetic trees~\cite{Aldous96,AldPit23,aldous2023critical}, we emphasize that the sparsification observed results in an overwhelming part from the tree's topology, irrespective of its edge lengths (which effectively store covariances). Hence, the simulated data gives reasonable insight into the performance of this technique in general.
    
    As already mentioned, the degree of sparsification guaranteed by Theorem~\ref{thm:kreg-zeta} is conservative. For one, it does not account for the fact that the Haar-like wavelets are orthonormal, which would suggest that at least some entries associated with $\psi_u$ and $\psi_v$, where $u$ is an ancestor of $v$, vanish. In fact, if a $k$-regular tree is \textit{trace balanced}~\cite{GorLla23} (that is, for all $v\in I$  and $i,j\in L(v)$, $\ell^*(i, v) = \ell^*(j, v)$), Theorem~\ref{thm:cov-diag} implies that the Haar-like wavelets fully diagonalize its covariance matrix. As a result, we often see a better degree of sparsification than what is guaranteed by Theorem \ref{thm:kreg-zeta}.
         


    Our analysis shows that the technique of sparsifying dense phylogenetic covariance matrices by a change-of-basis with the Haar-like wavelets extends to the broader class of $k$-regular trees. While further work is required to verify the exact performance of this method on random $k$-ary trees (i.e., ones for which each interior node contains \textit{at most} $k$ children), well-known properties of generating functions enumerating simple varieties of trees suggest that such a generalization is possible by the methods employed here. Comparison with $k$-regular trees indicates sparsification for what one may call “almost” $k$-regular trees (i.e., $k$-ary trees which are $k$-regular except at a relatively small number of internal nodes); however, initial investigation suggests that the worst-case $k$-ary tree might result in poor sparsification. Such research would give access to new phylogenetic datasets with more complicated but richer hierarchical structures.
    
\section*{Acknowledgements}
{This research has been partially funded by the NSF grant No. 1836914.}



\end{document}